\newtheorem{proposition}{Proposition}
\begin{document}

\title{An Artificial Bee Colony Based Algorithm for Continuous Distributed Constraint Optimization Problems}

\author{K. M. Merajul Arefin$^*$, Mashrur Rashik$^{\dagger}$,  Saaduddin Mahmud$^{\dagger}$, Md. Mosaddek Khan$^*$
 \thanks{$^*$ authors are with the University of Dhaka, Bangladesh. {\tt\footnotesize kmmerajul-2016014394@cs.du.ac.bd} and {\tt\footnotesize mosaddek@du.ac.bd}. }
 
 \thanks{$^\dagger$ authors are with the University of Massachusetts Amherst, USA.
    {\tt\footnotesize {mrashik,smahmud}@cs.umass.edu}}
}



\maketitle

\begin{abstract}

Distributed Constraint Optimization Problems (DCOPs) are a frequently used framework in which a set of independent agents choose values from their respective discrete domains to maximize their utility. Although this formulation is typically appropriate, there are a number of real-world applications in which the decision variables are continuous-valued and the constraints are represented in functional form. To address this, Continuous Distributed Constraint Optimization Problems (C-DCOPs), an extension of the DCOPs paradigm, have recently grown the interest of the multi-agent systems field. To date, among different approaches, population-based algorithms are shown to be most effective for solving C-DCOPs. Considering the potential of population-based approaches, we propose a new C-DCOPs solver inspired by a well-known population-based algorithm Artificial Bee Colony (ABC). Additionally, we provide a new exploration method that aids in the further improvement of the algorithm's solution quality. Finally, We theoretically prove that our approach is an anytime algorithm and empirically show it produces significantly better results than the state-of-the-art C-DCOPs algorithms.

\end{abstract}


\section{Introduction}

Distributed Constraint Optimization Problems (DCOPs) \cite{petcu2005dpop} are a powerful framework for modeling cooperative multi-agent systems in which multiple agents (or robots) communicate with each other directly or indirectly. The agents act autonomously in a shared environment in order to optimize a global objective that is an aggregation of their constraint cost functions. Each of the functions is associated with a set of variables controlled by the corresponding agents. Agents in a DCOP must coordinate value assignments to their variables in order to maximize their aggregated utility or minimize the overall cost. DCOPs have been successfully applied to solve many multi-agent coordination problems, including  multi-robot coordination \cite{zivan2015distributed}, sensor networks \cite{stranders2009decentralised}, smart home automation \cite{fioretto2017multiagent,rust2016using}, smart grid optimization \cite{miller2012optimal,kumar2009distributed}, cloud computing\:applications\:\cite{hoang2019new},\:etc.

In general, DCOPs assume these variables are discrete and that constraint utilities of the system are represented in tabular forms. However, several problems such as target tracking sensor orientation \cite{fitzpatrick2003distributed}, sleep scheduling of wireless sensors \cite{hsin2004network} are better modeled with variables with continuous domains. Even though DCOPs can deal with the continuous-valued variables through discretization, for such a problem to be tractable, the discretization process must be coarse and sufficiently fine to produce high-quality solutions to the problem \cite{stranders2009decentralised}. To address this issue, \cite{stranders2009decentralised} suggest a framework called Continuous DCOPs (C-DCOPs), which extends the general DCOPs to operate with variables that take values from a range. Additionally, in C-DCOPs, constraint utilities are specified by functions instead of the tabular form of the traditional DCOPs.

Over the last few years, a number of algorithms have been introduced to solve C-DCOPs \cite{stranders2009decentralised,voice2010hybrid,choudhury2020particle,hoang2020new,sarker2020c}. Initially, Continuous Max-Sum (CMS) \cite{stranders2009decentralised} is proposed to solve C-DCOPs which is an extension of the discrete max-sum algorithm \cite{farinelli2008decentralised}. CMS approximates the utility functions of the system with piecewise linear functions. Afterward, Hybrid CMS (HCMS) \cite{voice2010hybrid} utilizes discrete Max-Sum as the underlying algorithmic framework with the addition of a continuous non-linear optimization method. A major issue with CMS and HCMS is that they are not capable of providing quality guarantees of the solutions. \cite{sarker2020c} proposed a non-iterative solution for C-DCOPs; however, it cannot provide anytime solutions. Recently, a population-based anytime algorithm based on Particle Swarm Optimization (PSO), namely PFD \cite{choudhury2020particle} is proposed, which has shown better results than other state-of-the-art algorithms. However, scalability remains a big issue for PFD as the number of agents in the system increases.

More recently, a variety of exact and non-exact algorithms were introduced by \cite{hoang2020new}. Specifically, the inference-based DPOP \cite{petcu2005dpop} has been expanded to suggest three algorithms: Exact Continuous DPOP (EC-DPOP), Approximate Continuous DPOP (AC-DPOP), and Clustered AC-DPOP (CAC-DPOP). EC-DPOP presents an exact solution where a system's agents are grouped in a tree-structured network. However, this is not a feasible assumption in most problems. AC-DPOP and CAC-DPOP, on the other hand, give approximate solutions to the C-DCOP problem for general constraint graphs. In addition, they develop Continuous DSA (C-DSA) by extending the search based Distributed Stochastic Algorithm (DSA) \cite{zhang2005distributed}. As these three approximate algorithms use continuous optimization techniques such as gradient-based optimization, they require derivative calculations and, therefore not suitable for non-differentiable optimization problems such rectilinear data fitting \cite{bertsekas1973descent,elhedhli1999nondifferentiable}. Moreover, recent experiments comparing against other C-DCOP algorithms also show that these algorithms produce non-exact solutions of poor quality \cite{choudhury2020particle}.

Against this background, we propose a new non-exact anytime algorithm - Distributed Artificial Bee Colony algorithm for C-DCOPs, that is inspired by a recent variant of the well-known Artificial Bee Colony (ABC) algorithm \cite{karaboga2005idea,xiao2019improved}. Similar to the ABC algorithm, our approach is a population-based stochastic algorithm that stores multiple candidate solutions amongst the agents. It improves the global solution by iteratively adjusting the candidate solutions and discards solutions that do not improve after a certain period. Additionally, we have also designed a noble exploration mechanism for our algorithm, that we generally call ABCD-E, with the intend to further improve the solution quality. We also theoretically prove that ABCD-E is an anytime algorithm. Finally, our empirical results show that ABCD-E outperforms the state-of-the-art algorithms by upto $20\%$.

\section{Background}

In this section, we first describe the C-DCOP framework. We then briefly discuss the Artificial Bee Colony (ABC) algorithm on which our contribution is based.

\subsection{Continuous Distributed Constraint Optimization Problems}
A Continuous Distributed Constraint Optimization Problem can be defined as a tuple $\langle A,X,D,F,\alpha\rangle$ \cite{stranders2009decentralised,choudhury2020particle} where,

\begin{itemize}
    \item $A = \{a_i\}_{i=1}^n$ is a set of agents.
    
    \item $X = \{x_i\}_{i=1}^m$ is a set of variable. In this paper, we use the terms ``agent`` and ``variable`` interchangeably i.e. $n = m$.
    
    \item $D = \{D_i\}_{i=1}^m$ is a set of continuous domains. Each variable $x_i \in X$ takes values from a range $D_i = [LB_i, UB_i]$.
    
    \item $F = \{f_i\}_{i=1}^l$ is a set of utility functions, each $f_i \in F$ is defined over a subset $x^i = \{x_{i_1},x_{i_2},...,x_{i_k}\}$ of variables $X$ and the utility for that function $f_i$ is defined for every possible value assignment of $x^i$, that is, $f_i: D_{i_1} \times D_{i_2} \times ... \times D_{i_k} \rightarrow \mathbb{R}$ where the arity of the function $f_i$ is $k$. In this paper, we only considered binary quadratic functions for the sake of simplicity.
    
    \item $\alpha : X \to A$ is a mapping function that associates each variable $x_j \in X$ to one agent $a_i \in A$.
\end{itemize}

The solution of a C-DCOP is an assignment $X^*$ that maximizes the constraint aggregated utility functions as shown in Equation \ref{eq:1}.

\begin{equation} \label{eq:1}
X^{*} = \operatorname*{argmax}_{x^i} \sum_{f_i \in F} f_i(x^i)
\end{equation}

\begin{figure}
    \centering
    
    \begin{tikzpicture}
    [
    roundnode/.style={circle, draw=green!60, fill=green!5, very thick, minimum size=7mm}
    ]
    
    \node[roundnode]    at(1, 0) (x1)
        {$x_1$};
        
    \node[roundnode]    at(0, -1) (x2)
        {$x_2$};
    
    \node[roundnode]    at(1, -1) (x3)
        {$x_3$};
    
    \node[roundnode]    at(2, -1) (x4)
        {$x_4$};
        
    \draw (x1) -- (x2);
    \draw (x1) -- (x3);
    \draw (x3) -- (x2);
    \draw (x4) -- (x1);
    \node  at (1,-2)
    {
        (a) Constraint Graph
    };
    
    \node at (5,1)  {
       $f_{12}(x_1, x_2) = x_1^2 - cos(2\pi x_2)$
    };
    \node at (5,0)  {
         $f_{13}(x_1, x_3) = e^{\sqrt{x_1^2 + x_3^2}}$
    };
    \node at (5,-1)  {
       $f_{14}(x_1, x_4) = (x_1 + 2x_4 - 7)^2$
    };
    \node at (5,-2)  {
        $f_{23}(x_2, x_3) = x_2^2 + x_3^2 - x_2x_3$
    };
    \node at (5,2)  {
         $\forall x_i \in X : D_i = [-10, 10]$
    };
    \node  at (5.5,-3)
    {
        (b) Utility Functions 
    };
    \end{tikzpicture}
    
    \caption{An example of a C-DCOP}
    \label{fig:c-dcop}
\end{figure}

Figure~\ref{fig:c-dcop} shows an exemplary C-DCOP in which Figure~\ref{fig:c-dcop}a depicts a constraint graph where the connections among the variables are given and variable $x_i$ is controlled by the corresponding agent $a_i$. Figure \ref{fig:c-dcop}b shows the utility functions defined for each edge in the constraint graph. In this example, every variable $x_i \in X$ takes value from its domain $D_i = [-10, 10]$.

\subsection{Artificial Bee Colony Algorithm}

Artificial Bee Colony (ABC) \cite{karaboga2005idea} is a population-based stochastic algorithm that has been used to find the minimum or maximum of a multi-dimensional numeric function. It is worth noting that ABC is inspired by honey bees' search for a better food source in nature. Algorithm \ref{algo:abc} depicts the steps that ABC follows.

\begin{algorithm}[t]
\small
\DontPrintSemicolon
$P \leftarrow$ Set of random solutions\;
\Repeat{Requirements are met} {
    $B \leftarrow$ Search improved solutions near solutions in\:$P$ \; 
    $P \leftarrow P \cup B$ \;
    $C \leftarrow $ Search improved solutions near good solutions of\:$P$ \;
    $P \leftarrow P \cup C$ \;
    Discard solutions of $P$ that did not improve \;
}
\caption{Artificial Bee Colony Algorithm}
\label{algo:abc}
\end{algorithm}

Initially, a population $P$ is created with several random solutions (Line 1). Afterwards, for each solution of $P$, it creates new solutions $B$ (Line 3) and if better solutions are found, $P$ gets updated with $B$ (Lines 4). It also creates new solutions $C$ from particular solutions of $P$ which are selected based on their quality (Line 5). $P$ gets updated with solutions in $C$ by replacing the solutions that they have been produced from if they are better from those solutions (Line 6). Solutions that are not being updated after a certain period in $P$ are replaced with new random solutions (Line 7). These operations are running until the termination criteria are met (Line 8). Recently, a recent variant of ABC has been proposed that improves the overall solution quality by introducing elite set and dimension learning (see \cite{xiao2019improved} for more detail).

\subsection{Challenges}

ABC optimizes a single centralized function that operates in centralized system. On the other hand, C-DCOP is a framework designed for multi-agent environment.  In this setting, the population generated by the algorithm need to be stored in a distributed among the agents. Hence, tailoring ABC for C-DCOPs is not a trivial task. Moreover, it is also challenging to incorporate the anytime property as it is necessary to identify the global best solution within the whole population. And, whenever the global best gets updated, a distributed method needs to be devised to propagate that global best to all the agents. In the following section, we describe our algorithm ABCD-E which addresses all of these challenges. 



\section{The ABCD-E Algorithm}

ABCD-E is a population based anytime C-DCOP algorithm that is inspired by the popular Artificial Bee Colony Algorithm\footnote{We use the recent variant of the ABC algorithm\cite{xiao2019improved}.} ABCD-E in general works with a population, referring to a set of solutions of a given problem. In a C-DCOP framework, we have to store the population in a distributed manner. Hence the population are distributed amongst the agents $A$. Each agent maintains a set of objects having size equal to the population size $S$ (we will discuss the property of an object shortly). For further clarification, we have shown the population distribution in Table \ref{table:populaiton-distribution}. In this table, each row represents a single solution, $P^t = \{P^t_1, P^t_2,..., P^t_n\}$ whereas each column represents what a single agent will store $P_i = \{P^1_i, P^2_i,...., P^S_i\}$. Here, $P^t_i$ denotes the object which is stored by agent $a_i$ and is part of the solution $P^t$. As we have stated earlier, each solution consists of a set of objects that hold various values. Here are the list of values a single object holds:

\begin{itemize}
    \item $P^t_i.x_i$: Candidate value for $x_i$ which is the decision variable held by agent $a_i$
    \item $P^t_i.local-fitness$: Aggregated utility of the calculated by agent $a_i$ only with its neighbors
    \item $P^t_i.fitness$: Utility calculated by agent $a_i$ for Solution $P^t$. From now on, we use the terms fitness and utility interchangeably. 
\end{itemize}

\begin{table*}[]
\centering
\begin{tabular*}{\textwidth}{@{\extracolsep{\fill}}|c|c|c|c|c|@{}}
\toprule
            & Agent $a_1$ & Agent $a_2$ & ... & Agent $a_n$ \\ \midrule
Solution $P^1$ & $P^1_1$      & $P^1_2$      & ... & $P^1_n$      \\ \midrule
Solution $P^2$ & $P^2_1$      & $P^2_2$      & ... & $P^2_n$      \\ \midrule
...         & ...      & ...      & ... & ...      \\ \hline
Solution $P^S$ & $P^S_1$      & $P^S_2$      & ... & $P^S_n$      \\ \bottomrule
\end{tabular*}
\caption{Population representation of ABCD}
\label{table:populaiton-distribution}
\end{table*}

ABCD-E keeps an elite set $E$ which stores a copy of $M$ best solutions from $P$. Similar to $P$, the elite set is also maintained in a distributed way. Here, $ROOT$ maintains a set namely $visited$, which has a total of $S$ elements for each solution in the population $P$. Each element of $visited^u \in visited$ has $n$ boolean values $visited^u.v_i$ for each agent $a_i \in A$. If the value of $visited^u.v_i$ is $TRUE$, it means that for solution $u$, an agent $a_i$ has explored its search space. $ROOT$ also has $prob$ and $fit$ values for each solution which we will discuss later in this section. ABCD-E takes two parameters as inputs: $S$ is the number of solutions in the population that has to be created and $M$ is the number of solutions in $E$ which is called the elite set.

\begin{algorithm}
\small
\DontPrintSemicolon
\caption{The ABCD-E algorithm}
\label{algo:abcd}
\setcounter{AlgoLine}{0}
\KwIn{$S$ - Number of Solution in the Population \\
    \quad\quad\quad $M$ - Number of Solutions in the Elite set
}

INITIALIZATION( )\;
\While{Termination Criteria not met} {
    BUILD( )\;
    $Q^u_i \leftarrow P^u_i, \forall P^u_i \in P_i$ \;
    \If{$a_i = ROOT$} {
        \For{$Q^u_i \in Q_i$}{
            $a_j \leftarrow$ Random agent from $A$\;
            Send Request to $a_j$ to update value of $Q^u_j$ \;
            $visited^u.v_j \leftarrow TRUE$ \;
        }
    }
    \While{Get Request to update from $ROOT$}{
        $E^l_i \leftarrow $ Random Solution From $E_i$ \;
        $a_h \leftarrow $ Random agent from $A - \{a_i\}$ \;
        Wait until $E^l_h,\ P^u_h$ is received from $a_h$ \;
        Calculate $Q^u_i$ from Equation \ref{eq:employed}\;
    }
    EVALUATE($Q_i$) \;
    \If{$a_i = ROOT$} {
        Replace $P^u_i$ by $Q^u_i$ in the pseudo tree if $Q^u_i.fitness > P^u_i.fitness, \forall Q_i^u \in Q_i$\;
        Propagate $Q^u$ as $Gbest$ in the pseudo tree if $Q^u_i.fitness > Gbest.fitness, \forall Q_i^u \in Q_i$\;
    }
    Calculate $prob$ for each solution if $a_i = ROOT$ from Equation \ref{eq:positive} and \ref{eq:prob}\;
    \For{$z \leftarrow 1$ \KwTo $S$}{
        \If{$a_i = ROOT$}{
            $P^u_i \leftarrow $ Random solution from $P_i$ according to $prob$ \;
            Propagate $u$ in the pseudo tree to make a copy of $u$-th solution \;
            \For{$E^m_i \in E_i$}{
                $a_j \leftarrow$ Random agent from $A$\;
                $visited^u.v_j \leftarrow TRUE$ \;
                Send Request to $a_j$ to update value for $R^u_j$ with $E^m$ \;
            }
        }
        $R^t_i \leftarrow P^u_i, \forall t \in \{1,...,M\}$ \;
        \While{Get Request to update from $ROOT$} {
            $a_h \leftarrow $ Random agent from $A - \{a_i\}$ \;
            $E^l_i \leftarrow $ Random Solution From $E^l$ \;
            Wait until $E^m_h,\ P^u_h$ from agent $a_h$ \;
            Calculate $R^m_i$ from Equation \ref{eq:onlooker} \;
        }
        EVALUATE($R_i$) \;
        \If{$a_i = ROOT$} {
            Replace $P^u_i$ by $R^t_i$ in the pseudo tree if $R^t_i.fitness > P^u_i.fitness, \forall R^t_i \in R_i$\;
            Propagate $R^t$ as $Gbest$ in the pseudo tree if $R^t_i.fitness > Gbest.fitness, \forall R^t_i \in R_i$\;
        }
    }
    \For{$visited^t \in visited$} {
        \If{$a_i = ROOT$}{
            \If{$visited^t.v_i$ is $TRUE$ $\forall visited^t.v_i \in visited^t$} {
                Set every element of $visited_t$ to FALSE \;
                Send request to every agent to change $P^t_v$ \;
            }
        }
        Calculate $P^t_i$ from Equation \ref{eq:init} if request for changing value of $P^t_i$ \;
    }
}
\end{algorithm}

\subsection{The Algorithm Description}

ABCD-E starts with constructing a BFS-pseudo-tree (see \cite{chen2017improved} for more details) (Line 44, a initialization procedure called from Line 1). The BFS-pseudo-tree is used as a communication structure to transmit messages among the agents. Here, each agent has a unique priority associated with them. To be precise, an agent with a lower depth has higher priority than an agent with higher depth. When two agents have the same depth, their priorities are set randomly. We use $N_i$ to refer agent $a_i$'s neighboring agents and the notations $PR_i$ and $CH_i$ to refer to the parent and children of agent $a_i$, respectively. The root agent and the leaf agents do not have any parent and children, respectively.

\begin{procedure}[t]
\small
\DontPrintSemicolon
\caption{INITIALIZATION()}
\setcounter{AlgoLine}{43}
    Construct BFS Pseudo Tree \;
	$P^t_i.x_i \leftarrow $ Set of $S$ values in $D_i$ using Equation \ref{eq:init} $\forall P^t_i \in P_i$\;
	\If{$a_i = ROOT$} {
        $Gbest.fitness \leftarrow -\infty$ \;
        $visited^u.v_j \leftarrow FALSE$ $\forall u \in \{1,...,S\}$ and $j \in \{1,...,n\}$\;
    }

\label{algo:initialization}
\end{procedure}

After constructing the BFS-pseudo-tree, Line 45 executes the \textsc{INITIALIZATION} procedure that initializes the population $P$ using Equation \ref{eq:init}. In Equation \ref{eq:init}, $r^t_i$ is a random floating number from the range $[0, 1]$ chosen by agent $a_i$ for the $t$-th solution of population $P$. $ROOT$ then sets the value for each $visited^u.v_j$ to FALSE (Line 48). It also initializes the $fitness$ values of $Gbest$ to $-\infty$, as we are searching for a solution with maximum utility (Line 47).

\begin{equation} \label{eq:init}
    P^t_i.x_i = LB_i + r^t_i * (UB_i - LB_i)
\end{equation}

Now, the procedure \textsc{EVALUATE} is used to calculate the local utility for each object in a decentralized manner and also by using the local utility, we determine the aggregated utility for a particular population $W$. It first waits for the objects of its neighboring agents $a_j \in N_i$ (Line 49). It then calculates each agent $a_i$'s utility $f_{ij}$ for all of its neighbors $a_j$. For each function, we pass two values, $W^k_i.x_i$ and $W^k_j.x_j$, to calculate $f_{ij}$. It aggregates all $f_{ij}$ values and stores it in the $fitness$ variable (Lines 50-51). Then, it waits for $fitness$ values from its child agents $a_j \in CH_i$ (Line 52). After receiving the values, it sums up those values and adds it to its own $fitness$ (Lines 53-54). Any agent other than $ROOT$ sends its $fitness$ values to its parent $PR_i$ (Lines 58-59). This process continues until $ROOT$ receives all the $fitness$ values from its child agents. At this point, the utility of each constrained is doubly added in the aggregated utility values. This is because the local utility values of both agents in the scope of the constraint are aggregated together and we are only considering binary quadratic functions in this paper. Hence, $ROOT$ divides each aggregated utility by 2 (Lines 55-57). 

\begin{procedure}[t]
\small
\DontPrintSemicolon
\caption{EVALUATE($W_i$)}
\label{algo:evaluate}
\setcounter{AlgoLine}{48}

\KwIn{$W_i$ - Particular population to calculate the aggregated utility}

Wait until $W_j$ values are received from each agent $a_j \in N_i$ \;
\For{$W^k_i \in W_i$}{
    $W^k_i.local-fitness \leftarrow \sum_{a_j \in N_i} f_{ij}(W^k_i.x_i, W^k_j.x_j)$ \;
}
Wait until all the $fitness$ values are received from each agent $a_j \in CH_i$ \;
\For{$W^k_i \in W_i$}{
    $W_i^k.fitness \leftarrow W^k_i.local-fitness + \sum_{a_j \in CH_i} W^k_j.fitness$ \;
}

\If{$a_i = ROOT$}{
    \For{$W^k_i \in W_i$}{
        $W^k_i.fitness \leftarrow W^k_i.fitness / 2$ \;
    }
}
\Else{
    Send $W_i.fitness$ to $PR_i$ \;
}
\end{procedure}

After the \textsc{INITIALIZATION} phase, ABCD-E calls the \textsc{BUILD} procedure for population $P$ (Line 3). It first calculates the aggregated utility for the population $P$ (Line 60). Afterwards, $ROOT$ selects $M$ best solutions among them and stores them in a distributed way (Lines 63-64). It also updates $Gbest$ if any of those $S$ solutions have a greater utility than the utility of $Gbest$ (Line 62). Each agent receives the propagation and stores the specific values in $E_i$ (Line 65). Afterwards, each agents constraint variable $x_i$ is set to $Gbest_i$ to achieve the anytime property.

\begin{procedure}[t]
\small
\DontPrintSemicolon
\caption{BUILD()}
\setcounter{AlgoLine}{59}
\label{algo:build}
	EVALUATE($P_i$) \;
	\If{$a_i = ROOT$} {
        Propagate $P^u_i$ as $Gbest$ in the pseudo tree if $P^u_i.fitness > Gbest.fitness, \forall P^u_i \in P_i$\;
        $E \leftarrow $ Set of $M$ best solutions from $P$ \;
        Propagate $E$ in the pseudo tree \;
    }
    Receive $E$ from $ROOT$ and store the values in $E_i$ \;
    $x_i \leftarrow Gbest_i$ \;
\end{procedure}

ABCD-E now moves onto updating each solution in the population $P$. It first creates a copy of the main $P$ in $Q$ (Line 4). $ROOT$  then chooses a random agent from $a_j \in A$ and sends a request to update $Q_j^u$, and sets the attribute $visited^u.x_j$ to TRUE (Lines 7-9). Agents who receive the request now update the object. It selects a random solution $E^l$ and another random agent $a_h$ excluding itself (Lines 11-12). It then waits until $E^l_h$ and $P_h^u$ is received from agent $a_h$ (Line 13). It then calculates $Q_i^u$ from Equation~\ref{eq:employed} where $\phi^u_i$ and $\Phi^u_i$  are two random numbers from $[-0.5, 0.5]$ and $[0, 1]$, respectively (Line 14). Note that often  updating a value might take it outside the range of $D_i$. It uses Equation~\ref{eq:fitinside} to get the value inside the range $[LB_i, UB_i]$. After completing the updates, each agent executes $EVALUATE$ for population $Q$ (Line 15). $ROOT$ then checks for a better solution(s). If $Q^u$ is better than $P^u$, $P^u$ gets replaced by all agents (Line 17). $ROOT$ also tries to update $Gbest$ if there is any better solution than the $Gbest$ (Line 18). Whenever a solution $u$ gets updated, we reset the values of $visited^u$ to $FALSE$ in ABCD-E.

\begin{equation} \label{eq:employed}
\begin{split}
Q^u_i.x_i = \frac{1}{2}(E^l_h.x_h + Gbest_i.x_i) + \phi^u_i(P^u_h.x_h - E^l_i.x_i) \\
+ \Phi^u_i(P^u_h.x_h - Gbest_i.x_i)
\end{split}
\end{equation}

\begin{equation} \label{eq:fitinside}
Q^u_i.x_i =
\begin{cases}
LB_i, \text{if } Q^u_i.x_i < LB_i\\
UB_i, \text{else if } Q^u_i.x_i > UB_i \\
Q^u_i.x_i, \text{otherwise}
\end{cases} 
\end{equation}

$Root$ calculates the probability of each solution being chosen for a re-search. Firstly, Equation \ref{eq:positive} converts every value to a positive value because there can also be negative valued fitness in the population (Line 19), while Equation \ref{eq:prob} determines the probability of being chosen for any solution.

\begin{equation} \label{eq:positive}
P_i^u.fit =
\begin{cases}
\frac{1}{1 + abs(P_i^u.fitness)}, \text{if } P_i^u.fitness < 0\\
1 + P_i^u.fitness, \text{otherwise}
\end{cases}
\end{equation}

\begin{equation} \label{eq:prob}
P^u_i.prob = \frac{P^u_i.fit}{\sum_{P^v_i \in P_i}P_i^v.fit}
\end{equation}

Afterward, ABCD-E runs an update process $S$ times. Each time $ROOT$ selects a solution from $P$ according to the probabilities it previously calculated (Line 22). It then creates $M$ solutions by changing the selected solution and for that it selects a random agent $a_j$ (Line 25), and set the values for $visited^u.x_j$ to TRUE (Line 26). Agent $a_i$ will send a request to $a_j$ to update the values given $u$ and $m$, where $u$ denotes the index of the solution in $P$ and $m$ denotes the index of the solution of $E$ (Line 27). Every agent makes $M$ copies of $P_i^u$ for the update process (Line 28). The agents, who receive the request for updating values in $R_i$, start with selecting another random agent $a_h$ other than itself (Line 30) and $E_i^l$ randomly from $E^l$ (Line 31). It then receives the values $E_h^m$ and $P_h^u$ from agent $a_h$ (Line 32). Equation \ref{eq:onlooker} determines the value of $R_i^m.x_i$ where $\phi^m_i$ and $\Phi^m_i$ are two random numbers from the ranges $[-0.5, 0.5]$ and $[0, 1]$, respectively (Line 33). Equation~\ref{eq:fitinside} fits the values inside the domain $D_i$ when some values are outside the range. Following the update process, each agent runs the \textsc{EVALUATE} procedure for $R_i$ (Line 34). $ROOT$ tries to update $P_i^u$ and $Gbest$ with $R_i$ when there is any scope of improvement (Lines 36-37). And, whenever an update occurs, it resets values of $visited^u$.

Finally, $ROOT$ observes each solutions to check whether every element of $visited^u$ is TRUE or not (Line 40). When TRUE, it means that, all agents have explored it for $u$-th solution. Otherwise, it resets the value of $visited^u$ to FALSE and sends a request to each agent to replace that solution values with random values using Equation \ref{eq:init} (Lines 41-42). Once agents receive the request from $ROOT$, the update operation is executed (Line 43).

\begin{equation} \label{eq:onlooker}
\begin{split}
R_i^m.x_i = \frac{1}{2}(E_h^m.x_h + Gbest_i.x_i) + \phi_i^m(P_h^u.x_h - E_i^l.x_i) \\
+ \Phi_i^m(P_h^u.x_h - Gbest_i.x_i)
\end{split}
\end{equation}

\section{Theoretical Analysis}
In this section, we prove that ABCD-E algorithm is anytime, that is the quality of solutions found by ABCD-E increases monotonically. We also evaluate both the time complexity and memory complexity for ABCD-E. 

\begin{proposition}
The ABCD-E algorithm is anytime.
\end{proposition}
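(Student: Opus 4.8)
The plan is to show that the quality of the best solution tracked by the algorithm---namely $Gbest$---never decreases across iterations, and that every agent's reported variable assignment $x_i$ always reflects this monotonically improving global best. This is precisely the anytime property: at any point the algorithm can be stopped and it returns a solution whose utility is at least as good as at any earlier stopping point.

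\textbf{Setup and quantity to track.} First I would fix the object of interest as $Gbest.fitness$, the aggregated utility of the current global-best solution maintained at $ROOT$. The key observation is that $Gbest$ is only ever modified through the propagation lines (Line 62 in \textsc{BUILD}, and Lines 18 and 37 in the main loop), and in each of these lines the update is \emph{guarded} by a strict-improvement condition of the form ``propagate $\cdot$ as $Gbest$ if $\cdot.fitness > Gbest.fitness$.'' Hence I would argue that $Gbest.fitness$ is non-decreasing simply because it is never overwritten by anything of lower or equal fitness---the guard forbids it.

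\textbf{Correctness of the fitness computation.} The step I expect to require the most care is verifying that the \emph{value} being compared against $Gbest.fitness$ is in fact the true aggregated utility of the candidate solution as defined by Equation~\ref{eq:1}. For this I would invoke the \textsc{EVALUATE} procedure: I need to confirm that after \textsc{EVALUATE} runs, each $W^k_i.fitness$ at $ROOT$ equals $\sum_{f_i \in F} f_i$ evaluated at that candidate assignment. This relies on the local-fitness aggregation up the pseudo-tree (Lines 50--54) together with the division by $2$ at the root (Lines 55--57) that corrects for each binary constraint being counted once by each of its two endpoints. I would state this as a short lemma-style observation: because the constraint graph is covered by the neighbor relation and each binary function $f_{ij}$ is summed at both $a_i$ and $a_j$, the root's pre-division sum is exactly twice the true objective, so the post-division value is correct. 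Given this, the guarded comparisons are comparisons of genuine objective values.

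\textbf{Anytime conclusion.} Finally I would tie the monotonicity of $Gbest.fitness$ to what the agents actually report. After \textsc{BUILD}, every agent sets $x_i \leftarrow Gbest_i$ (Line 66), so the assignment exposed by the system has utility exactly $Gbest.fitness$. Since $Gbest$ is propagated to all agents whenever it improves and is never replaced by an inferior solution, the assignment $\{x_i\}$ available at the end of any iteration has utility that is monotonically non-decreasing in the iteration count. I would close by noting that the random re-initialization of stalled solutions (Lines 40--43) and the candidate-generation steps via Equations~\ref{eq:employed} and~\ref{eq:onlooker} only ever produce \emph{candidates} in $Q$ and $R$; they can replace $P$ or $Gbest$ solely through the same improvement-guarded channels, so they cannot degrade the tracked solution. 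This establishes that ABCD-E is anytime.
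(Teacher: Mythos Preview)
Your proposal is correct and follows essentially the same approach as the paper: you argue that $Gbest$ is updated only through the improvement-guarded lines (18, 37, 62) so $Gbest.fitness$ is non-decreasing, and that the assignment $x_i \leftarrow Gbest_i$ at Line~66 therefore yields monotonically improving utility. Your treatment is in fact more thorough than the paper's, since you explicitly justify the correctness of \textsc{EVALUATE} and observe that the random re-initialization of stalled solutions cannot touch $Gbest$---points the paper's proof leaves implicit.
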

\begin{proof}
At the beginning of the iteration $\mathbf{i+1}$, the complete assignment to $X$ is updated according to the $Gbest$ (Algorithm~\ref{algo:abcd}: Line 66). Now, $Gbest$ is updated according to the best solution in the population found up to $\mathbf{i}$ (Algorithm~\ref{algo:abcd}: Lines 18, 37, and 62). Hence, the utility of $Gbest$ either stays the same or increases. Since X is updated according to Gbest, the global utility of the given C-DCOP instance at iteration $\mathbf{i+1}$ is greater than or equal to the utility at iteration $\mathbf{i}$. As a consequence, the quality of the solution monotonically improves as the number of iterations increases. Hence, the ABCD-E algorithm is anytime.   
\end{proof}

In terms of complexity, the population that are created in ABCD-E are stored in a distributed manner. So each agent holds $S$ elements for $P$, $S$ elements for $Q$ and $M$ elements for $R$. Hence, in total, each agent has $O(S + S + M) = O(S + M)$ values stored in them. But $ROOT$ holds two extra variables: $visited$ and $prob$. Because of that, $ROOT$ has $O(S * n + S) = O(S * n)$ extra values. Each agent first updates each solution once; and it then updates a solution $M$ times. For this reason, each agent  would do $O(S + S * M) = O(S * M)$ operations. As the value of $M$ is very small, this does not create an issue for time consumption.

\section{Experimental Results}

This section provides empirical results to demonstrate the superiority of ABCD-E against the current state-of-the-art C-DCOP algorithms, namely AC-DPOP, C-DSA, and PFD. The results reported in the name of ABCD-C represents the results of our algorithm without taking into account our proposed exploration mechanism. We do so to observe the impact of ABCD-E's exploration mechanism.

Specifically, we first  show the impact of population size $(S)$ and elite size $(M)$ on ABCD-E's performace and select the best pair of parameter values. Then we benchmark ABCD-E, and competing algorithms on Random DCOPs Problems \cite{choudhury2020particle} with binary quadratic constraints, i.e. constraints in the form of $ax^2 + bx + dy^2 + ey + fxy + g$. It is worth mentioning, although ABCD-E can operate with functions of any kind. We consider this form of constraint to be consistent with prior works. In each problem, we set each variable's domain to $[-10, 10]$. Furthermore, we consider three different types of constraint graph topology: scale-free graphs, small-world graphs, and random graphs. Finally, we run these benchmarks by varying both the number of agents and constraint density.

\begin{figure}[t]
    \centering
    \includegraphics[width=0.8\linewidth]{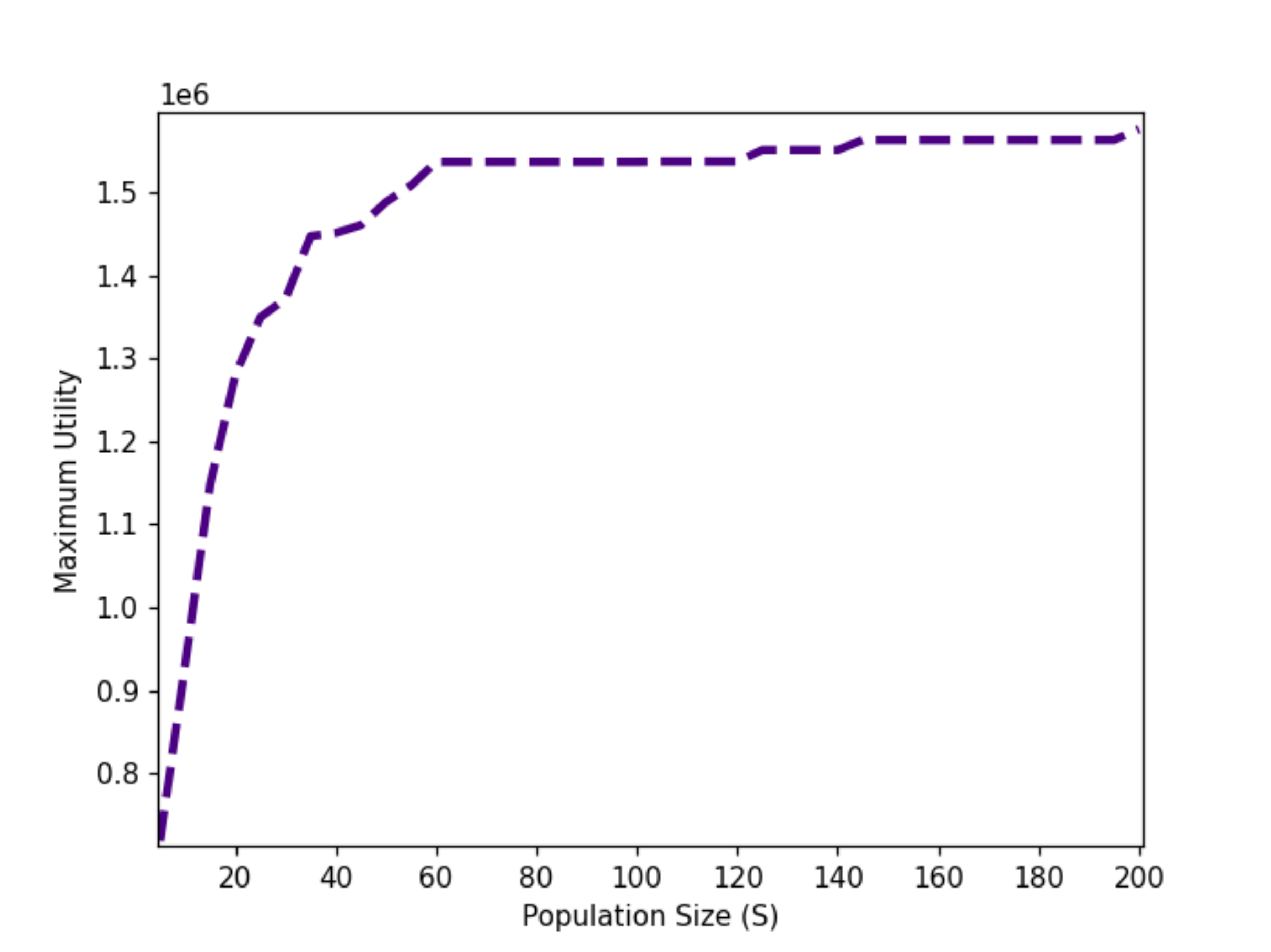}
    \caption{Effect of Population Size (S) on ABCD-E}
    \label{fig:population_tune}
\end{figure}

\begin{figure}[t]
    \centering
    \includegraphics[width=0.75\linewidth]{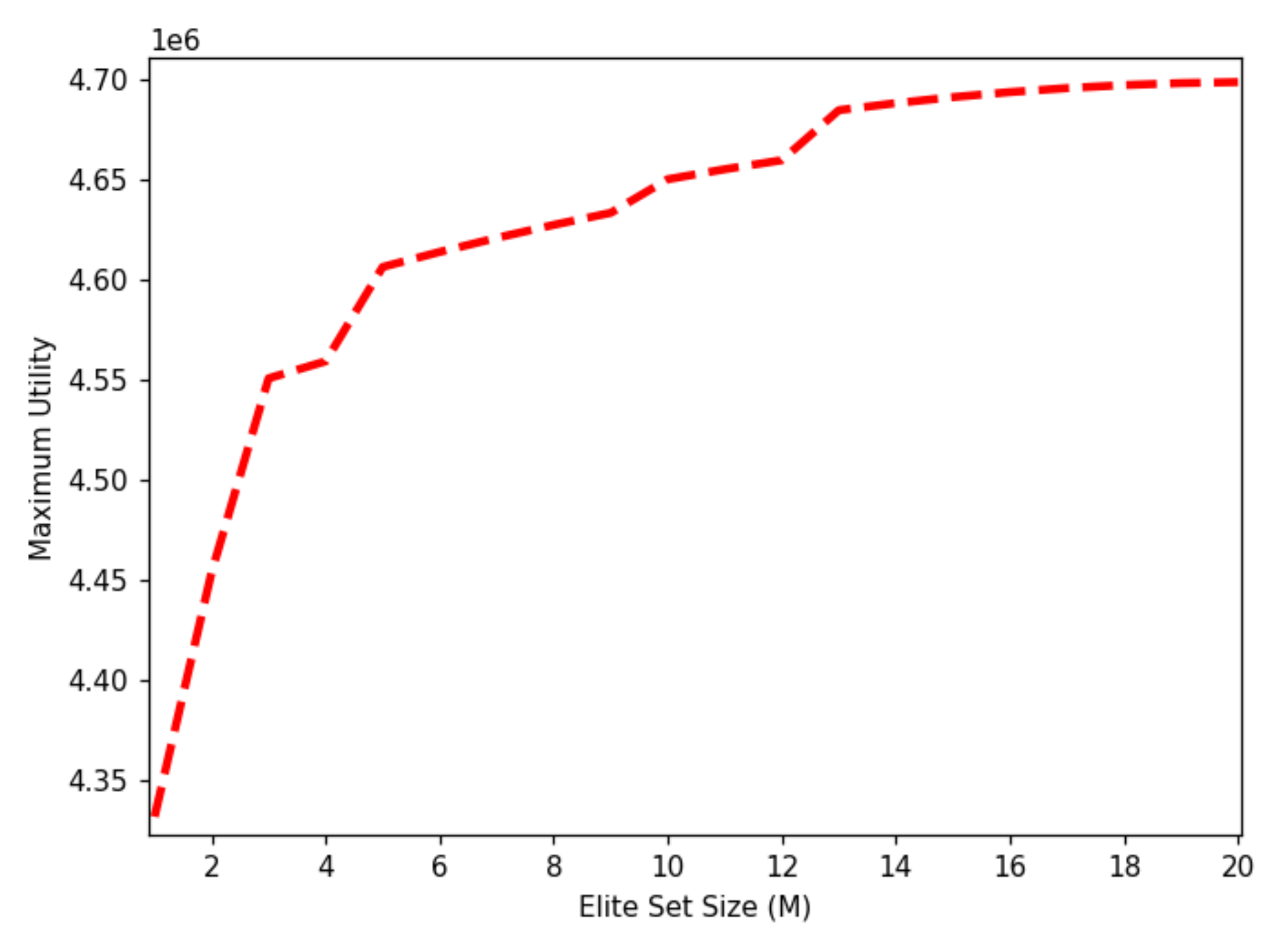}
    \caption{Effect of Elite set size (M) on ABCD-E}
    \label{fig:elite_tune}
\end{figure}

The performance of ABCD-E depends on its two parameters, i.e., population size $S$ and elite size $M$. To demonstrate the effect of population size $S$ on the solution quality, we benchmark ABCD-E on Erdos-Renyi topology\cite{erdds1959random} with constraint density 0.3. Figure \ref{fig:population_tune} present the $S$ vs. solution quality results of ABCD-E.

It can be observed from the results that, as we increase $S$, solution quality quickly improves up to $S = 200$. Increasing $S$ does not, however, result in a significant improvement in solution quality after the population size surpasses $100$. We run a similar experiment for elite size and present the results in Figure \ref{fig:elite_tune}. Similar to $S$, increasing $M$ after a certain point does not significantly improve the solution quality. Further, the computational complexity and the number of messages of ABCD-E depend on both $S$ and $M$ ($O(S*M)$). Considering this, we select $S = 100$ and $M = 10$ for ABCD-E. It is worth mentioning that selecting parameter values for ABCD-E is considerably easier than its main competitor algorithm PFD. This is because both $S$ and $M$ are only constrained by available resources, and increasing them results in a consistent improvement of solution quality.

\begin{figure}[t]
    \centering
    \includegraphics[width=\linewidth]{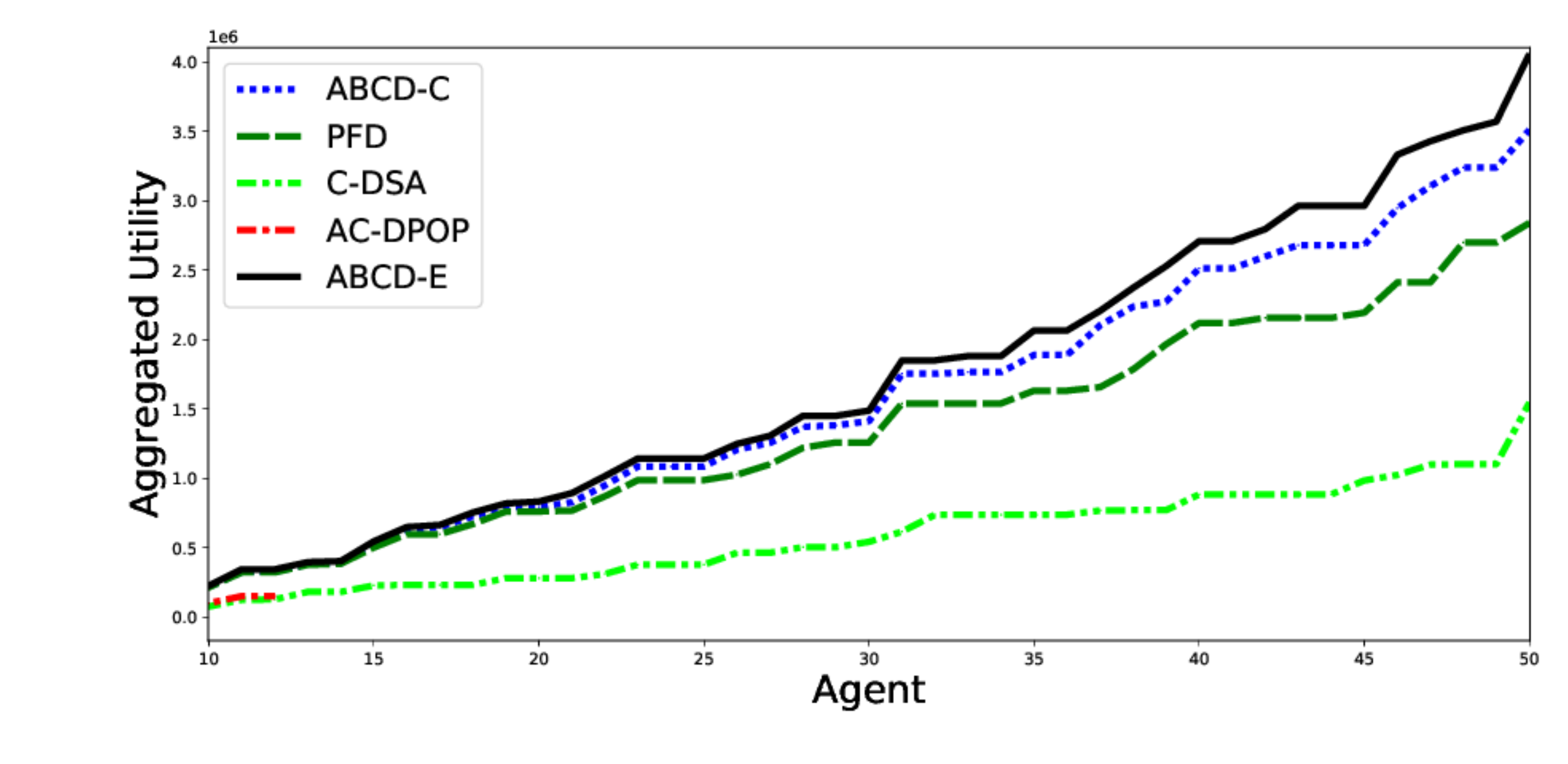}
    \caption{Comparison of ABCD-E and other state-of-the-art algorithms on Random Dense Graphs}
    \label{fig:random_dense}
\end{figure}

\begin{figure}[t]
    \centering
    \includegraphics[width=\linewidth]{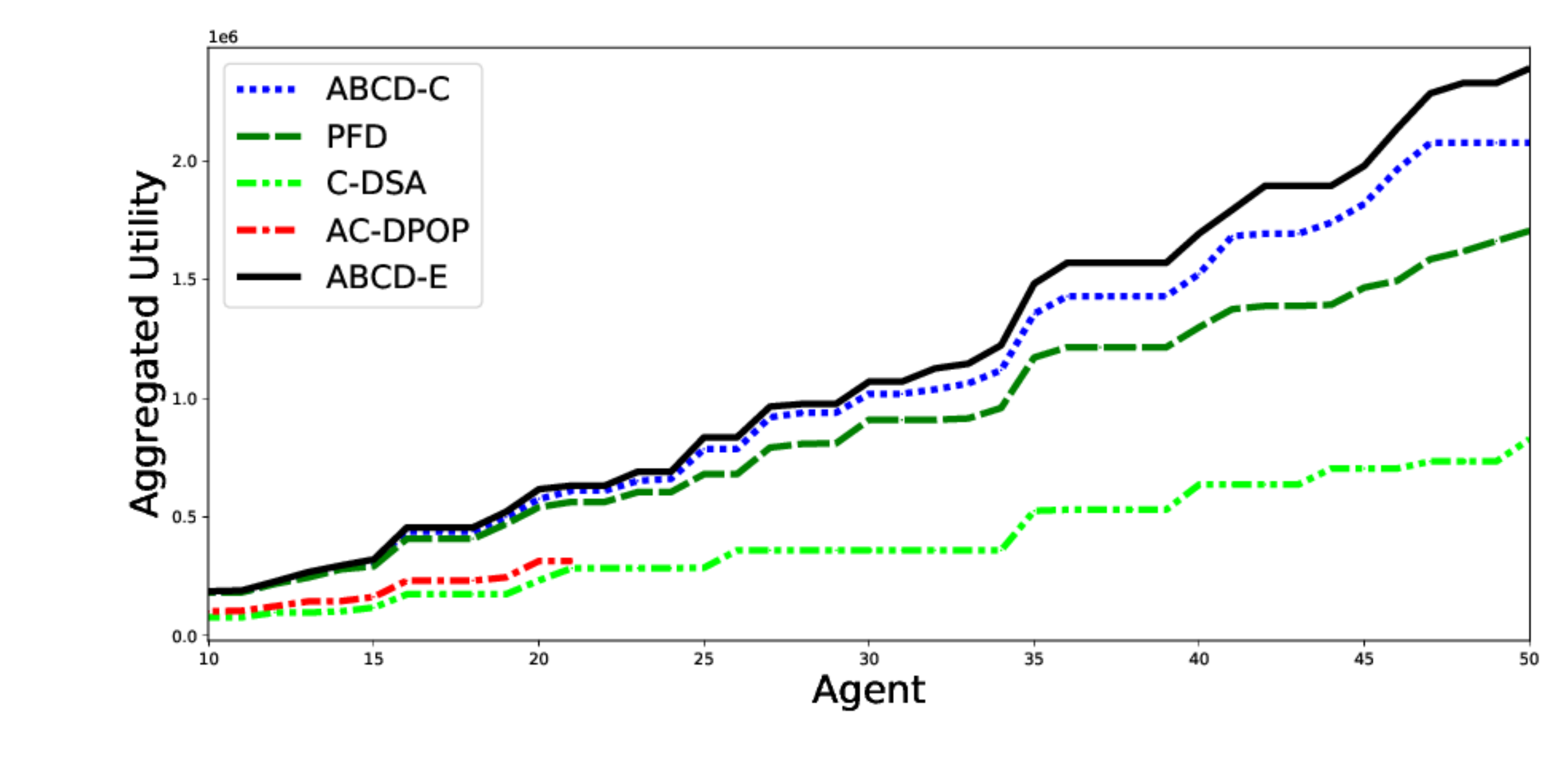}
    \caption{Comparison of ABCD-E and other state-of-the-art algorithms on Random Sparse Graphs}
    \label{fig:random_sparse}
\end{figure}

\begin{figure}[t]
    \centering
    \includegraphics[width=\linewidth]{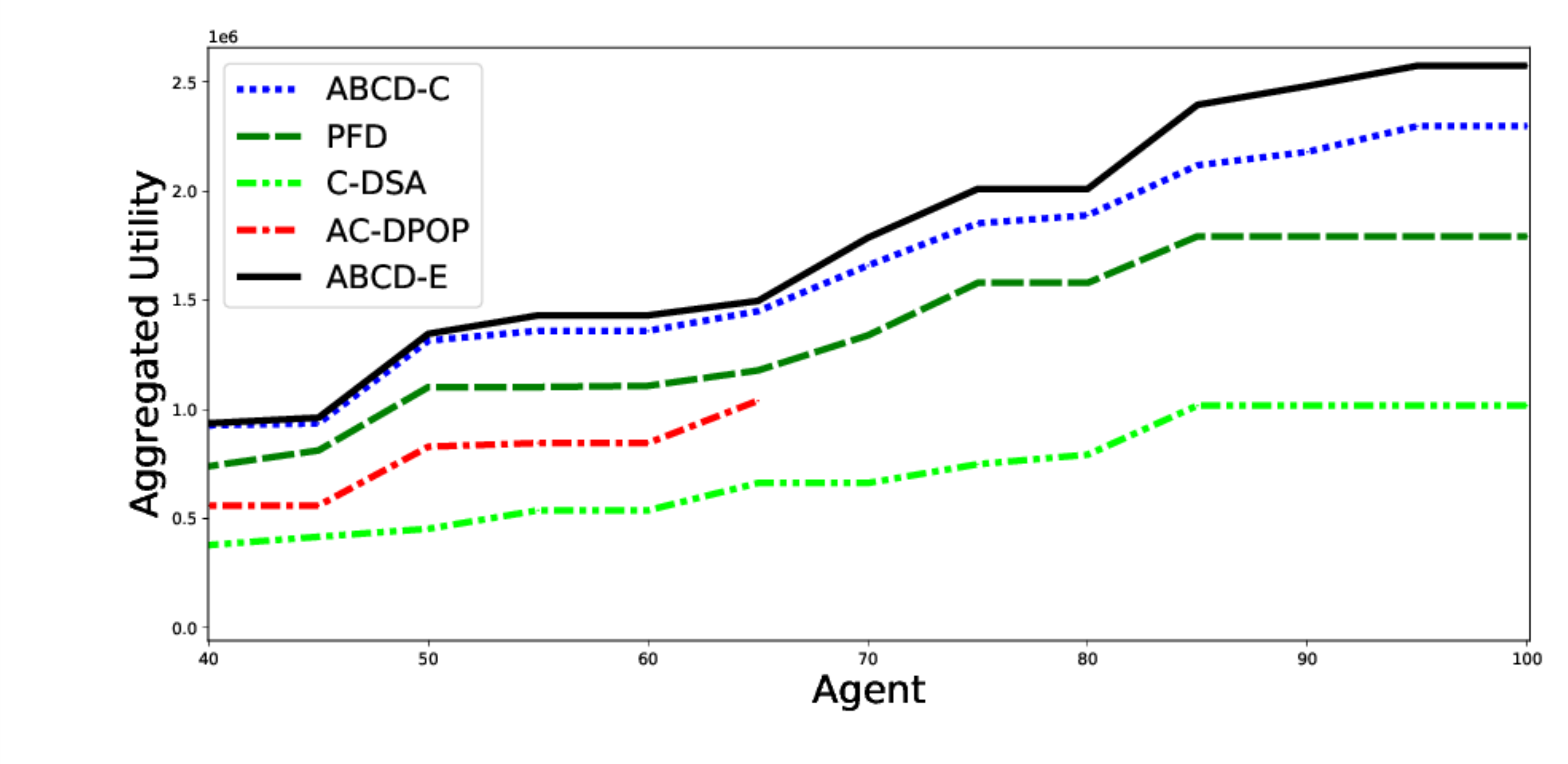}
    \caption{Comparison of ABCD-E and other state-of-the-art algorithms on Scale Free Sparse Graphs}
    \label{fig:scale_sparse}
\end{figure}

\begin{figure}[t]
    \centering
    \includegraphics[width=\linewidth]{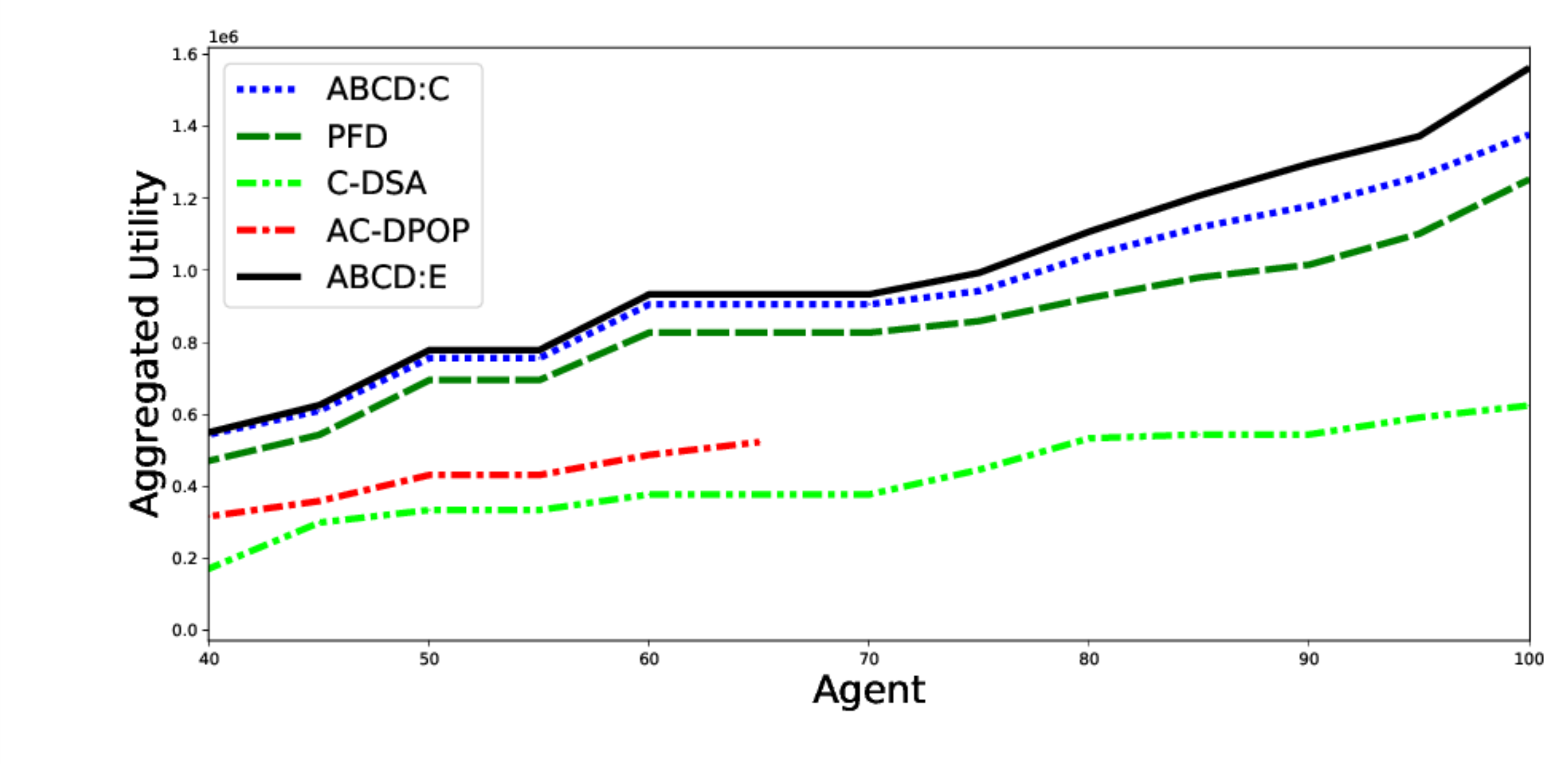}
    \caption{Comparison of ABCD-E and other state-of-the-art algorithms on Small World Sparse Graphs}
    \label{fig:small_sparse}
\end{figure}

We now compare ABCD-C and ABCD-E with the state-of-the-art C-DCOP algorithms, namely, PFD, AC-DPOP, and C-DSA. We select the parameters for these algorithms as suggested in their corresponding papers. In all settings, we run all algorithms on 50 independently generated problems and 50 times on each problem. We run all the algorithms for an equal amount of time. It is worth noting that all differences shown in Figures 2, 3, 4, and 5 are statistically significant for $p-value < 0.01$.

We first consider the random network benchmark. To construct random networks, we use Erdos-Renyi topology \cite{erdds1959random}. We vary the constraint density from $0.1$ to $0.7$. In all cases, both ABCD-E and ABCD-C outperform the competing algorithms. We only present the results for density $0.3$ (sparse Figure~\ref{fig:random_sparse}) and density $0.7$ (dense Figure~\ref{fig:random_dense}) for space constraints. In sparse settings, ABCD-C produces $22\%$ better solutions than its closest competitor PFD. On the other hand, ABCD-E improves the solutions generated by ABCD-C by $15\%$. In dense settings, we see a similar trend.

We similarly run the experiment using scale-free network topology. To create scale-free sparse networks, we used Barabasi-Albert \cite{albert2002statistical} topology where the number of edges to connect from a new node to an existing node is $3$. We present the result for this setting in Figure~\ref{fig:scale_sparse}. We see a similar performance improvement by ABCD-C and ABCD-E as in the random network experiment above. It is worth mentioning a similar trend in performance gain continues as we increase the graph density. 

Finally, we run our experiments on small-world networks. To construct small-world networks, we use the Watts-Strogatz topology \cite{watts1998collective} model where the number of nodes to join with a new node is $3$, and the likelihood of rewiring is $0.5$. Figure~\ref{fig:small_sparse} depicts that ABCD-C offers $10\%$ better performance than PFD. Moreover, ABCD-E enhances results generated by ABCD-C by $13\%$.

The experiments demonstrate the superiority of ABCD-E against the competing algorithm. We see a consistent performance gain over competing algorithms by ABCD-E under different constraint graph topology, size, and density. We see a similar trend of performance gain when comparing ABCD-E and ABCD-C. ABCD-E outperforms ABCD-C because each solution in the population is getting explored by every agent in ABCD-E. As different agents possess the ability to modify different parts of the solution due to factored nature of C-DCOPs, it significantly improves exploration. On the other hand, in ABCD-C, each solution gets explored a limited number of times, and it is not ensured that all the agents are given the opportunity to improve each solution. As a result, many potential good solutions are prematurely discarded.

\section{Conclusions and Future Work}

We develop a C-DCOP solver, namely ABCD-E, by tailoring a well-known population-based algorithm (i.e., Artificial Bee Colony). More importantly, we introduce a new a exploration mechanism with the intend to further improve the quality of the solution. We theoretically prove that ABCD-E is an anytime algorithm. Finally, we present empirical results that show that ABCD-E outperforms the state-of-the-art non-exact C-DCOPs algorithms by a notable margin. In the future, we would like to investigate whether ABCD-E can be applied to solve multi-objective C-DCOPs.

\bibliographystyle{IEEEtran}
\bibliography{bibliography}

\end{document}